\newtheorem{theorem}{Theorem}
\newtheorem{lemma}{Lemma}
\newtheorem{proposition}{Proposition}
\newtheorem{definition}{Definition}
	\theoremstyle{definition}
\newtheorem{example}{Example}
	\theoremstyle{remark}
\newcommand{\rKAt}{\mathbf{KA}\vec{\mathbf{t}}}
\newcommand{\g}[1]{\mathscr{#1}}
\newcommand{\at}[1]{\mathsf{#1}}
\newcommand{\eqt}{\approx}
\newcommand{\val}[1]{{\normalfont \texttt{[}#1\texttt{]}}}
\newcommand{\vval}[1]{\!\text{\normalfont\texttt{[\!\![}} #1 \text{\normalfont\texttt{]\!\!]}}}
\newcommand{\tra}{Tr}  
\title{One-sorted Program Algebras}
\author{Igor Sedl\'{a}r}
\author{Johann J.~Wannenburg}
\affil{Czech Academy of Sciences, Institute of Computer Science}
\date{April 27, 2022}
\begin{document}
\maketitle

\begin{abstract}
Kleene algebra with tests, KAT, provides a simple two-sorted algebraic framework for verifying properties of propositional while programs. Kleene algebra with domain, KAD, is a one-sorted alternative to KAT. The equational theory of KAT embeds into KAD, but KAD lacks some natural properties of KAT. For instance, not each Kleene algebra expands to a KAD, and the subalgebra of tests in each KAD is forced to be the maximal Boolean subalgebra of the negative cone. In this paper we propose a generalization of KAD that avoids these features while still embedding the equational theory of KAT. We show that several natural properties of the domain operator of KAD can be added to the generalized framework without affecting the results. We consider a variant of the framework where test complementation is defined using a residual of the Kleene algebra multiplication.
\end{abstract}

\section{Introduction}

Kleene algebra with tests \cite{Kozen1997}, KAT, is a simple algebraic framework for verifying properties of propositional while programs. KAT subsumes propositional Hoare logic \cite{Kozen2000} and it has been applied in a number of verification tasks. KAT has computationally attractive fragments \cite{SmolkaEtAl2020} and its extensions have been applied beyond while programs, for instance in network programming languages \cite{AndersonEtAl2014}. 

KAT is two-sorted, featuring a Boolean algebra of tests embedded into a Kleene algebra of programs. For various reasons, a one-sorted alternative to KAT may be desirable. One such alternative is Kleene algebra with domain, KAD, in its ``internal'' formulation \cite{DesharnaisStruth2011}. The idea of KAD is to expand Kleene algebra with a domain operator $d$ such that the set of images of elements of the algebra under $d$ forms a Boolean algebra. Hence, one obtains a Boolean algebra of tests in a one-sorted setting. Consequently, the equational theory of KAT embeds into the equational theory of KAD.

However, the properties of the domain operator in KAD lead to some curious features of the framework. For instance, the algebra of tests is always the maximal Boolean subalgebra of the negative cone in the underlying Kleene algebra. In a sense, then, every ``proposition'' is considered a test, contrary to some of the intuitions expressed in \cite{Kozen1997}. Moreover, not each Kleene algebra can be expanded with a domain operator, not even every finite one. This is in contrast with the fact that each KA is expands to a KAT.

In this paper we generalize KAD to a framework we call \emph{one-sorted Kleene algebra with tests}, KAt. Our framework has the desired features of KAD: every KAt contains a Boolean subalgebra of tests and the equational theory of KAT embeds into the equational theory of KAt. In addition, every Kleene algebra expands into a KAt, and the subalgebra of tests in KAt is not necessarily the maximal Boolean subalgebra of the negative cone. We also consider various extensions of KAt with axioms known from KAD to show which properties of the domain operator are still consistent with the desired features of KAt. In addition, we consider a variant of the KAt framework where test complementation is defined using a residual of Kleene algebra multiplication. It can be shown that algebras in this framework form a variety. 

The paper is organized as follows. In Sections \ref{sec:KAT} and \ref{sec:KAD} we recall Kleene algebra with tests and Kleene algebra with domain, respectively. In Section \ref{sec:KAt} we introduce one-sorted Kleene algebra with domain. We show that every Kleene algebra with domain is a one-sorted Kleene algebra with tests, that the test algebra in every one-sorted KAT is a Boolean algebra, and that every Kleene algebra expands into a one-sorted Kleene algebra with tests. In Section \ref{sec:ext} we consider some extensions of one-sorted KAT that retain the good properties. In Section \ref{sec:embedding} we show that the equational theory of KAT embeds into the equational theory of  one-sorted KAT and a number of its extensions. Section \ref{sec:residual} briefly considers a residuated variant of one-sorted KAT that forms a variety.

\section{Kleene algebra with tests}\label{sec:KAT}

We assume that the reader is familiar with the notion of an \emph{idempotent semiring}.  The definition of Kleene algebra is taken from \cite{Kozen1994}. The reader is referred to \cite{Kozen1997} for more details on Kleene algebra with tests.

\begin{definition}
A \emph{Kleene algebra} is an idempotent semiring $(K, \cdot, +, 1, 0)$ expanded with an operation $^{*} : K \to K$ such that
\begin{gather*}
1 + x x^{*} \leq x^{*}\\
1 + x^{*}x \leq x^{*}\\
y + xz \leq z \implies x^{*}y \leq z\\
y + zx \leq z \implies y x^{*} \leq z
\end{gather*}

A \emph{Kleene algebra with tests} is $(K, B, \cdot, +, ^{*}, 1, 0, \, \bar{ } \, )$ such that
\begin{itemize}
\item $(K, \cdot, +, ^{*}, 1, 0)$ is a Kleene algebra,
\item $B \subseteq K$, and
\item $(B, \cdot, +, \, \bar{ }\, , 1, 0) $ is a Boolean algebra.
\end{itemize}
\end{definition}
A standard example of a Kleene algebra is a relational Kleene algebra where $K$ is a set of binary relations over some set $S$, $\cdot$ is relational composition, $+$ is set union, $^{*}$ is reflexive transitive closure, $1$ is identity on $S$ and $0$ is the empty set; another standard example is the Kleene algebra of regular languages over some finite alphabet. 

Every KA is a KAT; take $B = \{ 0, 1 \}$ and define $\bar{0} = 1$, $\bar{1} = 0$ and $\bar{x} = x$ for $x \notin \{ 1, 0 \}$. The class of Kleene algebra with tests is denoted as $\mathbf{KAT}$. A standard example of a KAT is a relational KA expanded with a Boolean subalgebra of the \emph{negative cone}, i.e.~the elements $x \leq 1$, which in the relational case are subsets of the identity relation. The class of relational Kleene algebras with tests will be denoted as $\mathbf{rKAT}$.

The equational theory of $\mathbf{KAT}$ is defined in terms of a two-sorted language. Let $\mathsf{B} = \{ \mathsf{b}_n \mid n \in \omega \}$ and $\mathsf{P} = \{ \mathsf{p}_{n} \mid n \in \omega \}$ be two disjoint sets of \emph{test variables} and \emph{program variables}, respectively. The set of \emph{KAT terms} is two sorted:
\begin{itemize}
\item tests: \qquad\quad $b, c := \mathsf{b}_n \mid \mathsf{0} \mid \mathsf{1} \mid \bar{b} \mid b \cdot c \mid b + c $
\item programs: \quad $p,q := \mathsf{p}_n \mid b \mid p + q \mid p \cdot q \mid p^{*}$
\end{itemize}
Note that each test is a program, and so we may use program variables to range over arbitrary KAT terms. The set of KAT terms is denoted as $\mathscr{T}_{\text{KAT}}$. If $\mathscr{B}$ is a KAT, then a \emph{$\mathscr{B}$-valuation} is any function $h$ from $\mathscr{T}_{\text{KAT}}$ to $\mathscr{B}$ such that
\begin{itemize}
\item $h(\mathsf{b}_n) \in B$
\item $h(\mathsf{0}) = 0$ and $h(\mathsf{1}) = 1$, and
\item $h$ commutes with $\star \in \{ \cdot, +, ^{*}, \, \bar{ } \, \}$.
\end{itemize}
A \emph{KAT equation} is an ordered pair of KAT terms, denoted in general as $p \eqt q$. An equation $p \eqt q$ \emph{holds} is a $\mathscr{B}$-valuation $h$ iff $h(p) = h(q)$. The \emph{equational theory of $\mathbf{KAT}$} is the set of KAT equations that holds in all $\mathscr{B}$-valuations where $\g{B} \in \mathbf{KAT}$. The equational theory of $\mathbf{KAT}$ ($\mathbf{rKAT}$) is denoted as $\mathsf{KAT}$ ($\mathsf{rKAT}$).

Propositional Hoare logic embeds into $\mathsf{KAT}$ \cite{Kozen2000},  $\mathsf{KAT}$ is PSPACE-complete \cite{CohenEtAl1996}, and $\mathsf{KAT} = \mathsf{rKAT}$ \cite{KozenSmith1997}.

\section{Kleene algebra with domain}\label{sec:KAD}

KAT is a two-sorted framework, and for various reasons it may be desirable to look for a one-sorted alternative. For instance, ``one-sorted domain semirings are easier to formalise in interactive proof assistants and apply in program verification and correctness.'' \cite[p.~576]{FahrenbergEtAl2022}. A one-sorted alternative called Kleene algebra with domain was introduced in \cite{DesharnaisStruth2011}, following earlier work on a related two-sorted framework \cite{DesharnaisEtAl2006}. The idea is to expand a Kleene algebra with an operation $d$ that transforms elements $x \in K$ into \emph{test-like} elements $d(x)$, and to show that the set of images of $x \in K$ under $d$, $d(K)$, forms a Boolean algebra. In other words, instead of assuming the existence of a Boolean subalgebra, such a subalgebra arises from images under a new operation $d$. 

The desired properties of $d$ are inspired by relational KAT, where tests are subsets of the identity relation. One obvious choice of a $d$ such that $d(R)$ is a subset of the identity relation is the \emph{relational domain} operation
\begin{equation*}
d(R) := \{ (s, s) \mid \exists u. (s, u) \in R \} 
\end{equation*}
Informally, $d(R)$ represents the set of states in which the program associated with $R$ has a terminating computation. The main question shaping the definition of KAD is the following: \emph{which properties of relational domain should be adopted on the abstract algebraic level in order to yield a Boolean subalgebra?}

\begin{definition}
A \emph{predomain} operation on a Kleene algebra is any $d : K \to K$ such that 
\begin{gather}
x \leq d(x)x\label{left_preserver}\\
d(xy) \leq d(xd(y))\label{sublocality}\\
d(x) \leq 1\\
d(0) = 0\\
d(x + y) = d(x) + d(y)
\end{gather}
A \emph{domain} operation on a Kleene algebra is any predomain operation that satisfies 
\begin{equation}\label{locality}
d(xy) = d(xd(y))
\end{equation}
\end{definition}

It is easily checked that the relational domain operation is a domain operation on relational Kleene algebras.

\begin{proposition}[\cite{DesharnaisStruth2011}]\label{prop:dKA}
Let $\mathscr{K}$ be a Kleene algebra and let $d$ be a domain operation on $\mathscr{K}$. Let $d(K) := \{ y \mid \exists x. y = d(x) \}$. Then:
\begin{enumerate}
\item $d(K)$ contains $1$ and $0$, and it is closed under $\cdot$ and $+$;
\item $d(\g{K}) = (d(K), \cdot, +, 1, 0)$ is a bounded distributive lattice.
\end{enumerate}
\end{proposition}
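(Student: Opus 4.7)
The plan is to first establish a handful of preliminary consequences of the axioms, then verify closure of $d(K)$ under each of $0$, $1$, $+$, $\cdot$, and finally observe that the inherited operations turn $d(K)$ into a bounded distributive lattice.

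For the preliminaries I would derive, in order: (i) $d$ is monotone, from axiom (5), since $x \leq y$ implies $d(y) = d(x + y) = d(x) + d(y)$; (ii) $d(1) = 1$, combining axiom (1) at $x := 1$ with $d(1) \leq 1$; (iii) $d(d(x)) = d(x)$, which falls out of locality read as $d(x) = d(1 \cdot x) = d(1 \cdot d(x)) = d(d(x))$ --- the appeal to locality rather than merely sublocality is essential here, since sublocality alone would only yield $d(x) \leq d(d(x))$; and (iv) each $d(x)$ is multiplicatively idempotent, since $d(x)^2 \leq d(x) \cdot 1 = d(x)$ by $d(x) \leq 1$, while the reverse inequality follows from axiom (1) applied to $d(x)$ together with (iii).

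For part~1 of the proposition, closure of $d(K)$ under $0$, $1$, and $+$ is then immediate from axioms (4), (5) and fact~(ii). The main obstacle of the proof is closure under $\cdot$. My strategy is to set $a := d(d(x) d(y))$ and show $a = d(x) d(y)$. The inequality $d(x) d(y) \leq a$ follows from axiom (1) applied to the element $d(x) d(y)$, using $d(x) d(y) \leq 1$ to cancel the trailing factor. For the converse, monotonicity gives $a \leq d(d(x)) = d(x)$ (from $d(x) d(y) \leq d(x)$) and similarly $a \leq d(y)$; and since $a \in d(K)$ it is idempotent by~(iv), so $a = a \cdot a \leq d(x) \cdot d(y)$.

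For part~2 the lattice structure is essentially inherited. The operation $+$ remains the join on $d(K)$. The product $\cdot$ coincides with meet: for $a, b \in d(K)$ we have $ab \leq a$ and $ab \leq b$ because $a, b \leq 1$, and any $c \in d(K)$ below both $a$ and $b$ satisfies $c = c^2 \leq ab$ by (iv). Distributivity transfers from the semiring laws, and $0$ and $1$ supply the bottom and top elements.
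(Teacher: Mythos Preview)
Your argument is correct. The paper itself does not supply a proof of this proposition; it is quoted as a result from \cite{DesharnaisStruth2011}, so there is nothing in the paper to compare your proposal against directly. What you have written is essentially the standard derivation one finds in that source: monotonicity from additivity, $d(1)=1$ from axiom~\eqref{left_preserver}, idempotence of $d$ via locality with $x:=1$, multiplicative idempotence of tests, and then closure under product by squeezing $d(d(x)d(y))$ between $d(x)d(y)$ and itself. Your remark that locality (not merely sublocality) is needed for $d(d(x))=d(x)$ is also accurate, as witnessed by the algebra $\mathscr{A}_4$ in the appendix, where $d(d(2))=d(3)=1\neq 3=d(2)$. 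One small point you leave implicit is commutativity of $\cdot$ on $d(K)$; it does follow from your meet characterisation (since $ab$ and $ba$ are both the greatest lower bound of $a,b$ in $d(K)$), but it may be worth stating explicitly when you claim $(d(K),\cdot,+,1,0)$ is a lattice.
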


In order to obtain a Boolean algebra from the distributive lattice $d(\mathscr{K})$, one has to make sure that each test $d(x)$ is complemented in $d(K)$, that is, for each $d(x)$ there is $y \in d(K)$ such that $d(x)y = 0$ and $d(x) + y = 1$. An elegant solution to this problem presented in \cite{DesharnaisStruth2011} consists in expanding Kleene algebras with a single unary operation $a$ (\emph{antidomain}) that allows to define a domain operation $d$ and has properties entailing that $a(x)$ is a complement of $d(x)$.

\begin{definition}
A \emph{Kleene algebra with domain} is a Kleene algebra expanded with an operation $a : K \to K$ such that
\begin{gather}
a(x)x = 0\\
a(xy) \leq a(x\, a(a(y)) \label{antidomain:DNI}\\
a(x) + a(a(x)) = 1
\end{gather}
We define $d(x) := a(a(x))$.
\end{definition}

\begin{lemma}[\cite{DesharnaisStruth2011}]
In each KAD, $d$ is a domain operation and $a(x)$ is a complement of $d(x)$.
\end{lemma}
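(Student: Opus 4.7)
The plan is to verify, one at a time, the five predomain axioms, the locality axiom, and the complementation property, by short algebraic derivations from axioms (A1)--(A3); the argument essentially follows \cite{DesharnaisStruth2011}.

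First I would record two workhorse consequences. Applying (A1) with argument $a(x)$ in place of $x$ gives $a(a(x))\cdot a(x) = 0$, i.e.\ $d(x)\,a(x) = 0$. Right-multiplying $a(x)+d(x)=1$ (from (A3)) by $x$ and using (A1) gives
\[
x \;=\; a(x)x + d(x)x \;=\; 0 + d(x)x \;=\; d(x)x,
\]
which is the left-preserver axiom $x\leq d(x)x$, in fact with equality. Combining (A3) with positivity yields $d(x)\leq 1$; right-multiplying (A3) on the left by $d(x)$ and using $d(x)a(x)=0$ additionally shows that $d(x)$ is idempotent. Next I would derive that $a$ is antitone --- so that $d$ is monotone --- and use this together with (A2) to verify the remaining predomain axioms $d(0)=0$ and $d(x+y) = d(x)+d(y)$.

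For locality $d(xy) = d(xd(y))$, one direction, $d(xd(y))\leq d(xy)$, is immediate from (A2) combined with antitonicity of $a$; the other direction, $d(xy)\leq d(xd(y))$, uses the identity $y = d(y)y$, which rewrites $xy$ as $(xd(y))y$, together with monotonicity of $d$ and an auxiliary ``domain absorption'' inequality $d(uv)\leq d(u)$ that is itself derivable from (A2) and the additive structure.

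For complementation, $a(x)+d(x) = 1$ is (A3) and $d(x)\,a(x) = 0$ was noted above; the companion identity $a(x)\,d(x) = 0$ follows from the short computation
\[
d(a(x)) \;=\; d(a(x))\bigl(a(x)+d(x)\bigr) \;=\; d(a(x))a(x) + d(a(x))d(x) \;=\; a(x) + 0 \;=\; a(x),
\]
using $d(y)y = y$ with $y = a(x)$ for the first summand and $d(u)a(u)=0$ with $u = a(x)$ for the second. This shows $a(x) = d(a(x))$, whence $a(x)\,d(x) = d(a(x))\,d(x) = 0$ by the same instance of $d(u)\,a(u)=0$. I expect the main obstacle to be the harder direction of locality, together with the auxiliary antitonicity and absorption lemmas that must be extracted from the sparse three-axiom presentation; once these are in hand, every remaining step is a direct one- or two-line manipulation.
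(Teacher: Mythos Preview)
The paper does not actually prove this lemma: it is stated with a citation to \cite{DesharnaisStruth2011} and no argument is given. Your sketch explicitly follows that same reference, and the derivations you outline (left-preserver via (A3) and (A1), $d(x)\leq 1$ and idempotence from (A3), locality via antitonicity of $a$ together with the absorption inequality $d(uv)\leq d(u)$, and the complementation argument) are the standard ones from that source; the parts you flag as the main obstacles---antitonicity of $a$ and domain absorption---are indeed the nontrivial auxiliary facts, and they are established in \cite{DesharnaisStruth2011}, so your plan is sound and there is nothing in the present paper to compare it against.
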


\begin{theorem}[\cite{DesharnaisStruth2011}]
In each KAD, $d(\g{K}) = (d(K), \cdot, +, 1, 0)$ is a Boolean algebra.
\end{theorem}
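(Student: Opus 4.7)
The plan is to bootstrap off the two results immediately preceding the theorem. The lemma certifies that $d$ is a domain operation on the Kleene algebra, so Proposition \ref{prop:dKA} applies and $d(\mathscr{K})$ is already a bounded distributive lattice. All that remains is the purely lattice-theoretic task of exhibiting, for each $d(x) \in d(K)$, a complement that lies in $d(K)$.

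My candidate for the complement of $d(x)$ is $d(a(x))$, which sits in $d(K)$ by construction. To verify the join equation $d(x) + d(a(x)) = 1$, I instantiate the axiom $a(y) + a(a(y)) = 1$ at $y := a(x)$ and unfold the definition $d(z) := a(a(z))$. To verify the meet equation, I instantiate $a(y) y = 0$ at $y := a(a(x))$, which gives $a(a(a(x))) \cdot a(a(x)) = 0$, i.e.\ $d(a(x)) \cdot d(x) = 0$; commutativity of $\cdot$ on $d(K)$, guaranteed by Proposition \ref{prop:dKA} since $\cdot$ is the meet of the distributive lattice there, then lets me flip the factors to $d(x) \cdot d(a(x)) = 0$.

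The single step that could easily be overlooked is this final use of commutativity: the antidomain axiom $a(y) y = 0$ is one-sided, so without first appealing to Proposition \ref{prop:dKA} one only obtains the product in the wrong order. Once that is recorded, nothing further is needed, since a bounded distributive lattice in which every element has a complement is by definition a Boolean algebra. An alternative packaging would be to invoke the preceding lemma's assertion that $a(x)$ already complements $d(x)$ and then prove $a(x) = d(a(x))$ (the identity $aaa = a$) to move the complement into $d(K)$, but going directly through $d(a(x))$ avoids that detour.
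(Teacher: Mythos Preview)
The paper does not supply its own proof of this theorem; it is merely cited from \cite{DesharnaisStruth2011}, so there is nothing explicit to compare against. Your argument is correct: Proposition~\ref{prop:dKA} gives the bounded distributive lattice, and your verification that $d(a(x))$ complements $d(x)$ in $d(K)$ via the two antidomain axioms (with the commutativity fix) is sound.

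The only remark is that the paper's arrangement---stating as a lemma that $a(x)$ is a complement of $d(x)$ immediately before the theorem---signals that the intended route is the one you describe as the ``alternative'': take $a(x)$ as the complement from the lemma, then observe $a(x)\in d(K)$ (equivalently $a=aaa$). Your direct verification for $d(a(x))$ and that alternative are essentially the same computation read in different orders, so there is no substantive divergence.
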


It has been shown that, at least in the semiring signature extended with $d$, the domain axioms capture the essential properties of relational domain \cite{McLean2020}. However, $d$ has some peculiar features when it comes to general Kleene algebra and KAT. First, $d(\mathscr{K})$ is necessarily the maximal Boolean subalgebra of the negative cone of $\mathscr{K}$; see Thm.~8.5 in \cite{DesharnaisStruth2011}. In a sense, then, every ``proposition'' is considered a test, contrary to some of the intuitions expressed in \cite{Kozen1997}. These intuitions also collide with the approach of taking KAT as KA with a Boolean negative cone \cite{FahrenbergEtAl2022,Jipsen2004}. Second, not every Kleene algebra expands to a KAD, not even every finite one; see Prop.~5.3 in \cite{DesharnaisStruth2011}; see the first counterexample in the appendix. This is in contrast to the fact that every Kleene algebra expands to a KAT. This feature is caused by \eqref{locality} and the authors of \cite{DesharnaisStruth2011} express interest in variants of $d$ not satisfying \eqref{locality}.\footnote{The immediate candidate is predomain, but it can be shown that if $d$ is a predomain operation then $d(K)$ is not necessarily closed under $\cdot$, and that $d(x) = d(x)d(x)$ may fail (see $\mathbf{A}_4$ in the appendix). 
} 

Hence, it is interesting to ask if there is a suitable one-sorted alternative to KAT, in the sense of satisfying the following properties:
\begin{itemize}
\item \textbf{(P0)} The algebras in the given class $\mathbf{ALT}$ expand Kleene algebras by additional operations $t$ and $t'$.

\item \textbf{(P1)} The test algebra $t(\mathscr{A})$ for each $\mathscr{A} \in \mathbf{ALT}$ is a Boolean algebra.

\item \textbf{(P2)} Every Kleene algebra expands to an $\mathscr{A} \in \mathbf{ALT}$.

\item \textbf{(P3)} The test algebra $t(\mathscr{A})$ is not necessarily the maximal Boolean subalgebra of the negative cone of $\mathscr{A}$.

\item \textbf{(P4)} The equational theory of $\mathbf{KAT}$ embeds into the equational theory of $\mathbf{ALT}$.
\end{itemize}

Condition (P2) would ensure that $\mathbf{ALT}$ is a \emph{conservative expansion} of Kleene algebras, i.e., $\mathbf{ALT}$ satisfies exactly the same $(\cdot, +, ^*, 1, 0)$-equations as Kleene algebras.

\section{One-sorted KAT}\label{sec:KAt}

In the present section, we define a class of algebras satisfying (P0--3). The algebras in the class will be called simply \emph{one-sorted Kleene algebras with tests}, KAt. Property (P4) is established in Section \ref{sec:embedding}.

\begin{definition}
A \emph{one-sorted Kleene algebra with tests} is a Kleene algebra expanded by two unary operations  $t$ and $t'$ such that
\begin{gather}
t(0) = 0\label{t:0}\\
t(1) = 1\label{t:1}\\
t(t(x) + t(y)) = t(x) + t(y)\label{t:weakly_additive}\\
t(t(x) t(y)) = t(x)\, t(y)\label{t:multi}\\
t(x)t(x) = t(x)\label{t:idem}\\
t(x) \leq 1\label{t:negcone}\\
1 \leq t'(t(x)) + t(x)\label{t:lem}\\
t'(t(x))\, t(x) \leq 0\label{t:ecq}\\
t'(t(x)) = t(t'(t(x)))\label{t:cc}
\end{gather}
\end{definition}

\begin{example}
Relational KA with $t(R) = \{ (w, w) \mid \exists v. \, (w, v) \in R) \}$ and $t'(R) = \{ (w, w) \mid \neg \exists v. \, (w, v) \in R) \}$.
\end{example}

Every KAD $\mathscr{D} = (K, \cdot, +, \,^{*}, 1, 0, a)$ can be seen as an algebra of the type of KAt, namely, by adding $d$ explicitly to the signature. It can then be shown that KAt generalizes KAD.

\begin{proposition}
Every KAD is a KAt.
\end{proposition}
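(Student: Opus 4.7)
The plan is to define $t := d$ (where $d(x) := a(a(x))$ is the induced domain operation of the given KAD) and $t' := a$, then verify each of the nine KAt axioms (\ref{t:0})--(\ref{t:cc}) in turn. Most of the work has essentially been done already in the KAD literature: I will appeal to Proposition~\ref{prop:dKA} for the fact that $d(K)$ contains $0,1$ and is closed under $+$ and $\cdot$, and to two standard KAD identities, namely $d(d(x)) = d(x)$ (immediate from locality by $d(d(w)) = d(1 \cdot d(w)) = d(1 \cdot w) = d(w)$) and the triple antidomain identity $a(a(a(x))) = a(x)$, derivable from the KAD axioms and recorded in \cite{DesharnaisStruth2011}.

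Axioms (\ref{t:0}), (\ref{t:1}), and (\ref{t:negcone}) are immediate from the predomain axioms: $d(0)=0$ is postulated, $d(1)=1$ follows from (\ref{left_preserver}) together with the negative-cone axiom $d(x) \leq 1$, and the latter gives (\ref{t:negcone}) directly. Axioms (\ref{t:weakly_additive}) and (\ref{t:multi}) follow from Proposition~\ref{prop:dKA} combined with $d$ being the identity on $d(K)$: since $d(x)+d(y)$ and $d(x)d(y)$ already lie in $d(K)$, applying $d$ leaves them fixed. Axiom (\ref{t:idem}) is just the idempotence of meet in the bounded distributive lattice $d(\mathscr{K})$.

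For (\ref{t:lem}) and (\ref{t:ecq}), observe first that $a(d(x)) = a(a(a(x))) = a(x)$ by triple antidomain. Then (\ref{t:lem}) reduces to $a(x) + d(x) = 1$, which is the KAD axiom $a(y) + a(a(y)) = 1$ instantiated at $y = a(x)$; and (\ref{t:ecq}) is simply $a(y)y = 0$ instantiated at $y = d(x)$. For (\ref{t:cc}), we must show $a(d(x)) = d(a(d(x)))$; using $a(d(x)) = a(x)$ together with $a(x) \in d(K)$ (since $a(x) = d(a(x))$, again by triple antidomain) the right-hand side simplifies to $d(a(x)) = a(x)$, matching the left-hand side.

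There is no real obstacle here; the proof is largely careful bookkeeping. The only slightly non-routine ingredient is the identity $a(a(a(x))) = a(x)$, whose derivation from the antidomain axioms is standard and can be cited from \cite{DesharnaisStruth2011}. Once it is in hand, each of the KAt axioms is either a restatement of a KAD axiom at a suitable instance or an immediate consequence of the Boolean-algebra structure on $d(K)$ established in Proposition~\ref{prop:dKA}.
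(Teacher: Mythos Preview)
Your proposal is correct and takes essentially the same approach as the paper: set $t := d$, $t' := a$, and verify the KAt axioms using known KAD identities. The paper's own proof is a one-liner (``mostly a matter of easy calculation using known KAD properties''), so your version simply spells out the details the paper omits; in particular your use of the triple-antidomain identity $a(a(a(x))) = a(x)$ to handle (\ref{t:lem}) and (\ref{t:cc}) is exactly the kind of ``known KAD property'' the paper has in mind.
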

\begin{proof}
It needs to be checked that the KAt axioms are satisfied by any KAD where $d$ is seen as $t$ and $a$ is seen as $t'$. This is mostly a matter of easy calculation using known KAD properties.
\end{proof}

Let $\mathscr{K} = (K, \cdot, +, ^{*}, 1, 0, t, t')$ be a KAt. Let $t(K) = \{ x \mid \exists y. x = t(y) \}$. 

\begin{lemma}\label{lem:tK-welldef}
For all KAt, $d(K)$ is closed under the semiring operations and $t'$.
\end{lemma}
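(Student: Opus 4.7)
The plan is to verify closure under each of the five operations in turn; in each case the relevant closure claim is essentially a rewriting of one specific KAt axiom, so the proof is a short enumeration. First I would recall that, by definition of $t(K)$, to show $z \in t(K)$ it suffices to exhibit some $y \in K$ with $z = t(y)$.

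For the nullary part of the signature, axioms \eqref{t:0} and \eqref{t:1} supply $0 = t(0) \in t(K)$ and $1 = t(1) \in t(K)$ directly. For closure under $+$, take arbitrary elements $t(x), t(y) \in t(K)$; axiom \eqref{t:weakly_additive} states $t(t(x)+t(y)) = t(x)+t(y)$, which exhibits $t(x)+t(y)$ as a value of $t$ and hence as an element of $t(K)$. Closure under $\cdot$ is identical in form, invoking \eqref{t:multi} in place of \eqref{t:weakly_additive}.

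For closure under $t'$, the subtlety to note is that we only need $t'$ applied to something already in $t(K)$: so given $z = t(x) \in t(K)$, axiom \eqref{t:cc} reads $t'(t(x)) = t(t'(t(x)))$, which exhibits $t'(z)$ as a value of $t$ and hence as an element of $t(K)$.

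I do not anticipate any real obstacle: the axioms of KAt were evidently chosen so that each desired closure property becomes an immediate consequence of a named axiom. The only conceptual point worth flagging explicitly in the write-up is that closure under $t'$ is closure of $t(K)$ under the restriction $t' \restriction t(K)$, which is all that is needed for $t(K)$ to inherit a Boolean algebra structure later on; no closure of $t(K)$ under $t'$ applied to arbitrary elements of $K$ is being claimed.
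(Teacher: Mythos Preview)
Your proposal is correct and follows exactly the same approach as the paper: the paper's proof simply cites axioms \eqref{t:0}--\eqref{t:1} for the constants, \eqref{t:weakly_additive}--\eqref{t:multi} for $+$ and $\cdot$, and \eqref{t:cc} for $t'$. Your write-up is in fact more detailed than the paper's, and your closing remark about $t'$ being restricted to $t(K)$ is a helpful clarification that the paper leaves implicit.
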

\begin{proof}
The set $d(K)$ contains constants $1, 0$ by (\ref{t:0}--\ref{t:1}), and it is closed under $+$ and $\cdot$ by (\ref{t:weakly_additive}--\ref{t:multi}). $d(K)$ is closed under $t'$ by \eqref{t:cc}.  
\end{proof}

For each KAt $\mathscr{K}$, let $t(\mathscr{K}) = (t(K), \cdot, +, 1, 0, t')$. This is a well defined subalgebra of $\mathscr{K}$ by Lemma \ref{lem:tK-welldef}.

\begin{theorem}\label{thm:tK-BA}
Each $t(\mathscr{K})$ is a Boolean algebra. 
\end{theorem}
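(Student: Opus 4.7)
The plan is to show first that $(t(K), \cdot, +, 0, 1)$ forms a bounded distributive lattice with $0$ as bottom and $1$ as top, and then to verify that for each $a \in t(K)$ the element $t'(a)$ lies in $t(K)$ and is a complement of $a$ in this lattice. Closure of $t(K)$ under the relevant operations is already supplied by Lemma \ref{lem:tK-welldef}.

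Before addressing the lattice structure, I would record the useful identity $t(t(x)) = t(x)$, which comes from instantiating \eqref{t:multi} with $y = 1$ and using \eqref{t:1}. Consequently every $a \in t(K)$ satisfies $t(a) = a$, and by \eqref{t:negcone} also $a \leq 1$. The operation $+$ is already the join of the underlying idempotent-semiring order, so it computes joins on $t(K)$ as well. For meets, the argument is: given $a, b \in t(K)$, monotonicity of multiplication together with $a, b \leq 1$ yields $ab \leq a$ and $ab \leq b$; and if $c \in t(K)$ satisfies $c \leq a$ and $c \leq b$, then $c = cc \leq ab$ by \eqref{t:idem} and monotonicity, so $ab$ is the meet. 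The same uniqueness argument applied to $ba$ yields commutativity $ab = ba$ on $t(K)$. Distributivity is then inherited from the semiring law $a(b + c) = ab + ac$, and $0, 1$ are the bounds by \eqref{t:0}, \eqref{t:1}, and \eqref{t:negcone}.

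For complementation I would argue as follows. If $a \in t(K)$ then $a = t(a)$, so by \eqref{t:cc} one has $t'(a) = t'(t(a)) = t(t'(t(a)))$, placing $t'(a)$ in $t(K)$ and hence guaranteeing $t'(a) \leq 1$. The complementation laws $a + t'(a) = 1$ and $a \cdot t'(a) = 0$ then follow immediately from \eqref{t:lem} (using that $a + t'(a) \leq 1$ since both summands are below $1$) and \eqref{t:ecq} (since $0$ is the bottom). I do not anticipate any real obstacle; the only step requiring some care is the verification that $\cdot$ computes meets on $t(K)$, which combines all three ingredients $a \leq 1$, monotonicity of multiplication, and the idempotence axiom \eqref{t:idem}.
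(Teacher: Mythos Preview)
Your proposal is correct and follows essentially the same approach as the paper: establish that $(t(K),\cdot,+,0,1)$ is a bounded distributive lattice using \eqref{t:idem}--\eqref{t:negcone} together with semiring properties, and then obtain complements from \eqref{t:lem}--\eqref{t:ecq} (with \eqref{t:cc} ensuring $t'(a)\in t(K)$). You simply spell out the details---in particular the verification that $\cdot$ is the meet and is commutative on $t(K)$---that the paper leaves implicit.
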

\begin{proof}
Each $t(\mathscr{K})$ is a bounded distributive lattice by (\ref{t:idem}--\ref{t:negcone}) and properties of semirings. By (\ref{t:lem}--\ref{t:ecq}), $t'(x)$ is a complement of $x$ for all $x \in t(K)$. Since $t(\mathscr{K})$ is a bounded distributive lattice, it is uniquely complemented, and so it is a Boolean algebra.
\end{proof}

\begin{theorem}\label{thm:KAexpands_toKAt}
Each Kleene algebra expands to a KAt.
\end{theorem}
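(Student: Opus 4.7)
The plan is to define $t$ and $t'$ directly on an arbitrary Kleene algebra $\mathscr{K} = (K,\cdot,+,{}^*,1,0)$ in the simplest possible way, forcing $t(K) = \{0,1\}$ (the trivial two-element Boolean subalgebra sitting inside the negative cone). Concretely, I would set $t(0) = 0$ and $t(x) = 1$ for every $x \neq 0$, and set $t'(0) = 1$ and $t'(x) = 0$ for every $x \neq 0$, so that $t'$ restricted to $t(K)$ is the Boolean complement in $\{0,1\}$.

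With this definition in hand, I would verify the nine KAt axioms one by one. Axioms \eqref{t:0}, \eqref{t:1}, \eqref{t:idem} and \eqref{t:negcone} hold immediately by construction. For \eqref{t:weakly_additive} and \eqref{t:multi}, observe that $t(x)+t(y)$ and $t(x)t(y)$ always lie in $\{0,1\}$, and $t$ fixes both $0$ and $1$. For the complementation axioms \eqref{t:lem} and \eqref{t:ecq}, I would do a case split on whether $t(x) = 0$ or $t(x) = 1$: in each case $t'(t(x))$ is the Boolean complement of $t(x)$ in $\{0,1\}$, so both conditions collapse to trivially true (in)equalities. Axiom \eqref{t:cc} follows because $t'(t(x)) \in \{0,1\}$ and $t$ fixes $0$ and $1$.

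The only borderline case is the degenerate Kleene algebra in which $0 = 1$ and $K$ is a singleton; there the two clauses in the definitions of $t$ and $t'$ coincide on the unique element and every axiom holds vacuously. No step presents a real obstacle; the substance of the theorem is simply that the KAt axioms are weak enough to admit this uniform minimal expansion — in contrast with KAD, where the locality axiom \eqref{locality} blocks the analogous construction on some finite Kleene algebras, as mentioned in Section~\ref{sec:KAD}.
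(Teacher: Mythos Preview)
Your proposal is correct and follows essentially the same approach as the paper: define $t$ so that $t(K)=\{0,1\}$ and let $t'$ act as Boolean complement on $\{0,1\}$. The only cosmetic difference is that the paper sets $t'(x)=x$ for $x\notin\{0,1\}$ whereas you set $t'(x)=0$ there; since the KAt axioms constrain $t'$ only through expressions of the form $t'(t(x))$, this choice is immaterial.
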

\begin{proof}
The proof is similar as in the case of KAT. For any Kleene algebra $(K, \cdot, +, ^{*}, 1, 0)$, define:
\begin{center}
$t(x) = \begin{cases}
0 & \text{if } x = 0\\
1 & \text{otherwise.}
\end{cases}$ \qquad
$t'(x) =  \begin{cases}
0 & \text{if } x = 1\\
1 & \text{if } x = 0\\
x & \text{otherwise.}
\end{cases}$
\end{center}
The rest is checked by easy calculation.
\end{proof}

Theorem \ref{thm:KAexpands_toKAt} shows that, unlike KAD and like KAT, KAt is a proper generalization of Kleene algebra. The proof of the theorem also shows that $t(\mathscr{K})$ need not be the maximal Boolean subalgebra of the negative cone.

On the other hand, one may say that $t(x)$ lacks any intuitive relation to $x$, unlike the domain operator in KAD. For instance, it holds in KAD that $d(x)$ is a left preserver of $x$, that is, $x \leq d(x)x$. This lack of connection between $x$ and $t(x)$ can be seen as the reason why (P0--3) are satisfied. However, the following section shows that this is only partially so.

\section{An extension of KAt}\label{sec:ext}

In this section we consider a particular extension of KAt with a number of axioms known from KAD. We show that the extension still has properties (P0--3). 

\begin{definition}
A \emph{strong KAt}, sKAt, is a KAt satisfying the following axioms:
\begin{gather}
t(x +y) = t(x) + t(y) \label{t:additivity}\\
x \leq t(x) x \label{t:left_preserver}\\
t (t(x)y) \leq t(x) \label{t:least_left_preserver}\\
t(xy) \leq t(xt(y)) \label{t:sublocality}
\end{gather}
\end{definition}

\eqref{t:additivity} entails that $t$ is monotonic; \eqref{t:left_preserver} says that $t(x)$ is a left preserver of $x$; \eqref{t:least_left_preserver} entails
\begin{equation}
 x \leq t(y) x \implies t(x) \leq t(y) \, ,
 \end{equation} 
 saying that $t(x)$ is the \emph{least} left preserver among tests; and \eqref{t:sublocality} is the sublocality property of predomain operators. In fact, $t$ is a predomain operator in each sKAt. 
The algebra from Prop.~5.3 in \cite{DesharnaisStruth2011} shows that $t$ need not be a domain operation; see the first example in the appendix.
 
 It is clear that sKAt satisfies (P0) and (P1). It can also be shown, by revisiting the construction in the proof of Theorem \ref{thm:KAexpands_toKAt} that (P2) and (P3) are satisfied as well. However, as shown already in \cite{DesharnaisStruth2011}, adding locality $t(xt(y)) \leq t(xy)$ destroys (P2) and (P3).
 
 \section{Embedding KAT}\label{sec:embedding}
 
 In this section we show that KAt has property (P4) as well. The result actually applies to a larger class of cases.
 
We say that a KAT $\mathscr{K} = (K, B, \cdot, +, ^{*}, 1, 0, \, \bar{ } \, )$ \emph{expands} into a KAt $\mathscr{A} = (K, \cdot, +, \,^{*}, 1, 0, t, t')$ iff
 \begin{itemize}
 \item $B = t(K)$.
\end{itemize} 
Note that if $\mathscr{K}$ expands to $\mathscr{A}$, then $\bar{b} = t'(b)$, since $\bar{b} = \overline{t(y)} = t't(x) = t'(b)$. The second to last equation holds since $t't(x)$ is the complement of $t(x)$ in $t(K) = B$.
 
 \begin{definition}
 A class of KAt $\mathbf{A}$ is \emph{KAT-like} iff it satisfies properties (P0), (P1) and
 \begin{itemize}
 \item {\normalfont \textbf{(P2a)}} Every KAT expands to a $\mathscr{A} \in \mathbf{A}$. 
 \end{itemize}
 \end{definition}
 
 \begin{theorem}
 If $\mathbf{A}$ is KAT-like, then $\mathbf{A}$ satisfies (P4).
 \end{theorem}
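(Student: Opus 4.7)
My plan is to construct a syntactic translation $\tra$ from KAT terms to KAt terms and show that a KAT equation lies in $\mathsf{KAT}$ if and only if its translation is valid throughout $\mathbf{A}$. Fix an injection from test variables to a set of KAt variables disjoint from the program variables, say $\at{b}_n \mapsto \at{v}_n$. Define $\tra$ by $\tra(\at{p}_n) = \at{p}_n$, $\tra(\at{b}_n) = t(\at{v}_n)$, $\tra(\at{0}) = 0$, $\tra(\at{1}) = 1$, $\tra(\bar b) = t'(\tra(b))$, and let $\tra$ commute with $\cdot$, $+$, ${}^{*}$. Wrapping each test variable inside $t$ forces its interpretation into the test algebra $t(\g{A})$ of the chosen KAt, which by (P1) is a Boolean algebra.

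For soundness, assume $p \eqt q \in \mathsf{KAT}$; take any $\g{A} \in \mathbf{A}$ and any $\g{A}$-valuation $h$. Setting $B := t(A)$ and $\bar b := t'(b)$ for $b \in B$ turns $\g{A}$ into a genuine KAT $\g{A}^{\dagger}$, by (P1). Define a $\g{A}^{\dagger}$-valuation $h^{\dagger}$ by $h^{\dagger}(\at{b}_n) := h(t(\at{v}_n)) \in t(A) = B$ and $h^{\dagger}(\at{p}_n) := h(\at{p}_n)$. A routine induction on KAT terms yields $h^{\dagger}(r) = h(\tra(r))$, the critical case being $r = \bar b$, which is handled precisely by the identification $\bar{\phantom{x}} = t'\!\restriction_{t(A)}$. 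Hence $h(\tra(p)) = h^{\dagger}(p) = h^{\dagger}(q) = h(\tra(q))$.

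For completeness, assume $\tra(p) \eqt \tra(q)$ holds in every member of $\mathbf{A}$. Let $\g{K}$ be an arbitrary KAT and $h$ a $\g{K}$-valuation. By (P2a), $\g{K}$ expands to some $\g{A} \in \mathbf{A}$ with $B = t(A)$. Define an $\g{A}$-valuation $h^{*}$ by $h^{*}(\at{v}_n) := h(\at{b}_n)$ and $h^{*}(\at{p}_n) := h(\at{p}_n)$, extended arbitrarily elsewhere. Axioms \eqref{t:0} and \eqref{t:weakly_additive} (with $y = 0$) give $t \circ t = t$, so $t$ fixes every element of $t(A)$; hence $h^{*}(\tra(\at{b}_n)) = t(h(\at{b}_n)) = h(\at{b}_n)$, because $h(\at{b}_n) \in B = t(A)$. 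Induction on KAT terms, using the paper's observation that $\bar b = t'(b)$ in any expansion, gives $h^{*}(\tra(r)) = h(r)$ for all $r$, so $h(p) = h(q)$, as required.

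The principal obstacle in both directions is the complementation case $r = \bar b$: the argument depends on pinning down $t'$, restricted to $t(\g{A})$, as Boolean complementation on $B$. This is handed to us by (P1) together with the uniqueness of complements in a Boolean algebra, noting that in any KAT-to-KAt expansion $B = t(A)$ forces $\bar b = t'(b)$ for $b \in B$. All other inductive steps — variables, constants, and the Kleene-algebra operations — are mechanical, since $\tra$ is designed to commute with those symbols and the underlying Kleene-algebra reduct is preserved by the expansion.
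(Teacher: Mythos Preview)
Your proof is correct and follows essentially the same strategy as the paper: define a translation that wraps test variables in $t$ and replaces $\bar{\phantom{x}}$ by $t'$, then use (P1) to turn any $\g{A}\in\mathbf{A}$ into a KAT and (P2a) to turn any KAT into a member of $\mathbf{A}$, proving $\val{r}=\vval{\tra(r)}$ by induction in each direction. The only cosmetic differences are that the paper argues by contrapositive, uses an even/odd splitting of a single variable sort rather than your fresh $\at{v}_n$, and sets $\tra(\at{0})=t(\at{0})$, $\tra(\at{1})=t(\at{1})$ rather than $0,1$ (immaterial by axioms \eqref{t:0}--\eqref{t:1}).
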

 
 In the proof of the theorem, we will use the following translation from $\mathscr{T}_{\text{KAT}}$ to $\mathscr{T}_{\text{KAt}}$:
 \begin{itemize}
\item $\tra (\at{p}_n) = \at{x}_{2n}$
\item $\tra (\at{b}_n) = t( \at{x}_{2n +1} )$
\item $\tra (\mathsf{0}) = t(\mathsf{0})$
\item $\tra (\mathsf{1}) = t(\mathsf{1})$
\item $\tra (p + q) = \tra (p) + \tra (q)$
\item $\tra (p \cdot q) = \tra (p) \cdot \tra (q)$
\item $\tra (p^{*}) = \tra (p)^{*}$
\item $\tra (\bar{b}) = t' (\tra (b))$
\end{itemize}

\begin{proof}
We show that, for all KAT equations $p \eqt q$, there is KAT $\mathscr{K} \not\models p \eqt q$ iff there is $\mathscr{A} \not\models \tra (p) \eqt \tra(q)$ such that $\mathscr{A} \in \mathbf{A}$.

First, if $\mathscr{K} \not \models p \eqt q$, then there is a $\mathscr{K}$-valuation $\val{\,}$ such that $\val{p} \neq \val{q}$. By (P2a), $\mathscr{K}$ expands to $\mathscr{A} \in \mathbf{A}$. We define an $\mathscr{A}$-valuation $\vval{\,}$ as the unique homomorphism such that
\begin{itemize}
\item $\vval{\mathsf{x}_{2n}} = \val{\mathsf{p}_n}$,
\item $\vval{\mathsf{x}_{2n+1}} = \val{\mathsf{b}_n}$.
\end{itemize}
We prove by induction on KAT terms $r$ that
\begin{equation}\label{val-equivalence}
\val{r} = \vval{\tra (r)}
\end{equation}
Before the main inductive proof, we prove that \eqref{val-equivalence} holds for all tests $b$; the proof is by induction on tests. The base case is established as follows: $\vval{\tra (\mathsf{b}_{n}) } = \vval{t (\mathsf{x}_{2n+1})} = t \vval{ \mathsf{x}_{2n+1}} = t \val{\mathsf{b}_{n}} = \val{\mathsf{b}_{n}}$. The last equation holds by definition of expansion (we know that $\val{\mathsf{b}_{n}} \in B = t(K)$ and that $tt(x) = t(x)$). The cases of the induction step for $+$ and $\cdot$ are trivial. The case for $\, \bar{ } \, $ is established as follows: $\vval{\tra (bar{b})} = \vval{ t' \tra (b)} = t' \vval{\tra(b)} = t' \val{b} = \val{ t'(b) } = \val{ \bar{b}}$ by the definition of expansion (see the note after the definition). This concludes the proof for tests. The main inductive proof for KAT terms is now trivial.

The converse claim is established as follows. Assume that there is $\mathscr{A} \in \mathbf{A}$ such that $\mathscr{A} \not\models \tra (p) \eqt \tra (q)$. Hence, there is an $\mathscr{A}$-valuation $\vval{ \,}$ such that $\vval{\tra (p)} \neq \vval{\tra (q)}$.  It follows from (P1) that $t(\mathscr{A})$ is a Boolean algebra, and so $\mathscr{K} = (K, t(\mathscr{A}), \cdot, +, \, ^{*}, 1, 0, t')$ is a KAT. We define a $\mathscr{K}$-valuation $\val{\,}$ as the unique KAT-homomorphism such that
\begin{itemize}
\item $\val{\mathsf{p}_{n}} = \vval{\mathsf{x}_{2n}}$
\item $\val{\mathsf{b}_{n}} = \vval{ t (\mathsf{x}_{2n+1})}$
\end{itemize}
As before, it is sufficient to prove that \eqref{val-equivalence} holds in this setting for all KAT terms $r$. The proof is again by induction and the only interesting case is $\bar{b}$ (the case for $\mathsf{b}_n$ holds by definition). We reason as follows: $\vval{\tra (\bar{b})} = \vval{ t' \tra(b)} = t' \val{b} = \val{ t' (b) } = \val{\bar{b}}$. 
\end{proof}

\begin{theorem}
$\mathbf{KAt}$ is KAT-like.
\end{theorem}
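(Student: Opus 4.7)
The plan is to verify the three defining conditions (P0), (P1), and (P2a) of a KAT-like class for $\mathbf{KAt}$. Property (P0) holds by the very definition of KAt, which explicitly introduces the two unary operations $t$ and $t'$ on top of a Kleene algebra. Property (P1) is exactly the content of Theorem~\ref{thm:tK-BA}. The substantive work therefore lies entirely in establishing (P2a): every KAT expands to a KAt in the sense of Section~\ref{sec:embedding}, i.e.\ with $B = t(K)$.

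For (P2a), given a KAT $\mathscr{K} = (K, B, \cdot, +, ^{*}, 1, 0, \, \bar{\,}\,)$, I would keep its Kleene reduct untouched and define
\[
t(x) = \begin{cases} x & \text{if } x \in B, \\ 1 & \text{otherwise,} \end{cases}
\qquad
t'(x) = \begin{cases} \bar{x} & \text{if } x \in B, \\ 0 & \text{otherwise.} \end{cases}
\]
The behaviour of $t'$ outside $B$ is essentially irrelevant, since the KAt axioms only constrain $t'$ on tests, and $t(K) \subseteq B$. The required equality $B = t(K)$ holds by construction: $t$ fixes $B$ pointwise and its image lies in $B$.

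Verification of the nine KAt axioms then reduces to routine Boolean-algebra facts about $B$, exploiting that $t(x) \in B$ for all $x$ and that $t$ restricts to the identity on $B$. For example, \eqref{t:0} and \eqref{t:1} are immediate since $0, 1 \in B$; \eqref{t:weakly_additive} and \eqref{t:multi} amount to closure of $B$ under $+$ and $\cdot$; \eqref{t:idem} is idempotence of meet in $B$; \eqref{t:negcone} is $b \leq 1$ for $b \in B$; and the complementation axioms \eqref{t:lem}--\eqref{t:cc} reduce, for $b = t(x) \in B$, to the Boolean-algebra identities $b + \bar{b} = 1$, $b \cdot \bar{b} = 0$, and $\bar{b} \in B$. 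I would dispatch each axiom in this style and record them as routine.

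The main obstacle is conceptual rather than technical: unlike in Theorem~\ref{thm:KAexpands_toKAt}, where one could collapse all non-zero elements into the minimal Boolean subalgebra $\{0,1\}$, here the expansion must satisfy $t(K) = B$ \emph{exactly}, so $t$ must act as a retraction of $K$ onto $B$ fixing $B$ pointwise. Once one recognises this, sending every non-test to $1$ is the simplest such retraction and makes the axiomatic verification immediate.
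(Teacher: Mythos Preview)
Your proposal is correct and follows essentially the same approach as the paper: define $t$ to be the identity on $B$ and constantly $1$ outside $B$, then check the KAt axioms using Boolean-algebra facts about $B$. The only cosmetic difference is that the paper sets $t'(x) = x$ rather than $0$ for $x \notin B$, which---as you yourself observe---is immaterial since the axioms only constrain $t'$ on $t(K) = B$.
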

\begin{proof}
Take an arbitrary KAT $\mathscr{K} = (K, B, \cdot, +, \,^{*}, 1, 0, \, \bar{ }\, )$ and define $\mathscr{A} = (K, \cdot, +, \,^{*}, 1, 0, t, t')$ where:
\begin{center}
$t(x) = \begin{cases}
x & \text{if } x \in B\\
1 & \text{otherwise.}
\end{cases}$ \qquad
$t'(x) =  \begin{cases}
\bar{x} & \text{if } x \in B\\
x & \text{otherwise.}
\end{cases}$
\end{center}
It can be easily checked that $\mathscr{A}$ is a KAt.
\end{proof}

It is a matter of easy calculation that $\mathscr{A}$ in the proof of the above theorem satisfies \eqref{t:left_preserver} and \eqref{t:least_left_preserver}. For \eqref{t:additivity} and \eqref{t:sublocality}, a different approach is needed.

It is known that the equational theory of KAT is identical to the equational theory of relational KAT \cite{KozenSmith1997}. Hence, one could work with a modification of the notion of a KAT-like class of algebras.

\begin{definition}
A class of KAt $\mathbf{A}$ is \emph{rKAT-like} iff it satisfies properties (P0), (P1) and
 \begin{itemize}
 \item {\normalfont \textbf{(P2b)}} Every relational KAT expands to a $\mathscr{A} \in \mathbf{A}$. 
 \end{itemize}
\end{definition}

\begin{theorem}\label{thm:embedding}
 If $\mathbf{A}$ is rKAT-like, then $\mathbf{A}$ satisfies (P4).
\end{theorem}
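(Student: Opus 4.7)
The plan is to mimic the proof of the previous theorem (the analogous statement for KAT-like classes), inserting one extra step to bridge from arbitrary KATs to relational KATs. The only place where (P2a) was used in that earlier argument was in the forward direction: if a KAT $\mathscr{K}$ witnesses $\mathscr{K} \not\models p \eqt q$, then (P2a) produces an $\mathscr{A} \in \mathbf{A}$ expanding $\mathscr{K}$, from which one builds an $\mathscr{A}$-valuation falsifying $\tra(p) \eqt \tra(q)$. The reverse direction, which starts from a falsifying $\mathscr{A} \in \mathbf{A}$ and extracts a KAT from the Boolean algebra $t(\mathscr{A})$, used only (P0) and (P1) and therefore carries over verbatim.

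To handle the forward direction under the weaker hypothesis (P2b), I would appeal to the completeness result $\mathsf{KAT} = \mathsf{rKAT}$ cited at the end of Section \ref{sec:KAT}. So, suppose $p \eqt q$ fails in some KAT. Then, because $\mathsf{KAT} = \mathsf{rKAT}$, the equation already fails in some \emph{relational} KAT $\mathscr{K}$, witnessed by a $\mathscr{K}$-valuation $\val{\,}$ with $\val{p} \neq \val{q}$. Now (P2b) applies and produces an expansion $\mathscr{A} \in \mathbf{A}$ of $\mathscr{K}$. From here the construction of the $\mathscr{A}$-valuation $\vval{\,}$ and the inductive proof of $\val{r} = \vval{\tra(r)}$ for every KAT term $r$ are exactly as in the proof of the previous theorem, since those steps relied only on the fact that $\mathscr{A}$ expands $\mathscr{K}$ in the sense defined in Section \ref{sec:embedding} (in particular $B = t(K)$ and $\bar{b} = t'(b)$ on tests), not on $\mathscr{K}$ being relational.

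For the reverse direction, I would repeat the earlier argument unchanged: given $\mathscr{A} \in \mathbf{A}$ with $\vval{\tra(p)} \neq \vval{\tra(q)}$, use (P1) to form the two-sorted KAT $\mathscr{K} = (K, t(\mathscr{A}), \cdot, +, \,^{*}, 1, 0, t')$, define $\val{\at{p}_n} = \vval{\at{x}_{2n}}$ and $\val{\at{b}_n} = \vval{t(\at{x}_{2n+1})}$, and verify \eqref{val-equivalence} by induction, the only nontrivial case being $\bar{b}$, which is handled by the definition of the test complement on $t(\mathscr{A})$. This produces a KAT refuting $p \eqt q$, completing the biconditional.

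The only genuine new ingredient beyond the KAT-like case is the invocation of $\mathsf{KAT} = \mathsf{rKAT}$, and that is also the single step that requires care: without it, (P2b) is too weak to convert an arbitrary refuting KAT into one to which $\mathbf{A}$'s expansion property applies. Once that observation is in place, the argument is otherwise a mechanical transcription of the KAT-like proof, so I do not expect a substantive obstacle beyond citing the Kozen–Smith completeness theorem at the right point.
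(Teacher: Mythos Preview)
Your proposal is correct and follows essentially the same route as the paper: the paper's proof is a one-line remark that \eqref{val-equivalence} is established as before, relying on the observation made just above the definition of rKAT-like that $\mathsf{KAT} = \mathsf{rKAT}$ lets one replace an arbitrary refuting KAT by a relational one before invoking (P2b). You have simply made explicit the one new ingredient the paper leaves implicit.
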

\begin{proof}
The fact that \eqref{val-equivalence} holds in both directions of the proof is established as before.
\end{proof}

\begin{theorem}
$\mathbf{sKAt}$ and $\mathbf{KAD}$ are rKAT-like.
\end{theorem}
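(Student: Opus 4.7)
The plan is to verify (P0), (P1), and (P2b) for $\mathbf{sKAt}$ and $\mathbf{KAD}$ simultaneously. Observation (P0) is immediate: both classes expand Kleene algebras by two unary operations, where for $\mathbf{KAD}$ I read $t'$ as antidomain $a$ and $t$ as $d = aa$. For (P1), by definition $\mathbf{sKAt}\subseteq\mathbf{KAt}$, and the proposition ``Every KAD is a KAt'' proved earlier in the excerpt gives $\mathbf{KAD}\subseteq\mathbf{KAt}$; Theorem \ref{thm:tK-BA} then yields the Boolean-algebra conclusion in both cases with no further work.

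The heart of the argument is (P2b). Given a relational KAT $\mathscr{K}$ on base set $S$, I expand it using relational domain $t(R) = \{(s,s):\exists u.\,(s,u)\in R\}$ and relational antidomain $t'(R) = \{(s,s):\neg\exists u.\,(s,u)\in R\}$. This is the standard witness that every relational KA is a KAD: the antidomain axioms reduce to direct set-theoretic calculations, as recorded in \cite{DesharnaisStruth2011}. The matching clause $B = t(K)$ then follows because in canonical form $B$ is the full Boolean algebra of identity subsets --- each $b\in B$ satisfies $t(b) = b$, so $B\subseteq t(K)$, and conversely each $t(R)\subseteq 1$ lies in $B$.

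For $\mathbf{sKAt}$ the same construction works once I observe $\mathbf{KAD}\subseteq\mathbf{sKAt}$. Three of the four additional sKAt axioms --- additivity, the left-preserver $x\leq t(x)x$, and sublocality $t(xy)\leq t(xt(y))$ --- are already among the predomain axioms, hence hold in every KAD. The remaining least-left-preserver axiom $t(t(x)y)\leq t(x)$ is derived in KAD by combining locality with the fact that $d$ is the identity on $d(K)$: $d(d(x)y) = d(d(x)d(y)) = d(x)d(y) \leq d(x)$.

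The main obstacle is the bookkeeping clause $B = t(K)$, not the algebraic axioms. For $\mathbf{KAD}$ this essentially forces me to restrict to canonical relational KATs, since $d(K)$ must coincide with the maximal Boolean subalgebra of the negative cone (\cite[Thm.~8.5]{DesharnaisStruth2011}) and so strict choices of $B$ cannot be matched by any KAD expansion of the same carrier. This restriction is harmless for the downstream use in Theorem \ref{thm:embedding}: refutations in $\mathsf{KAT}$ can be transported to canonical relational models via $\mathsf{KAT} = \mathsf{rKAT}$, and the construction above then delivers the required $\mathscr{A}\in\mathbf{A}$ refuting $\tra(p)\eqt\tra(q)$.
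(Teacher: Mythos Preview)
Your proposal is correct and follows the same construction as the paper's own proof: expand a relational KAT by relational domain and antidomain, and observe that the result lies in $\mathbf{KAD}\subseteq\mathbf{sKAt}$. The paper's version is terser and does not separately address (P0), (P1), or the matching condition $B=t(K)$; your remark that this last clause effectively forces a passage to canonical relational models (with the full algebra of identity subsets as $B$) is a genuine subtlety the paper leaves implicit.
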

\begin{proof}
Take any relational KAT, and define 
\begin{itemize}
\item $t(R) = \{ (s,s) \mid \exists u. (s, u) \in R \}$;
\item $t'(R) = \{ (s,s) \mid \neg \exists u. (s, u) \in R \}$.
\end{itemize}
It is easily checked that we obtain a sKAt.
\end{proof}

Since it is known that Propositional Hoare Logic embeds into KAT \cite{Kozen2000}, we obtain the result that KAt, sKAt and KAD subsume PHL as well (the latter case is known).
 
 \section{Residual}\label{sec:residual}
 
 In this section we focus on a particular version of KAt where $t'$ is not a primitive operator, but it is defined using the residual of Kleene algebra multiplication. It is known Kleene algebras with residuals form a variety \cite{Pratt1991}. Consequently, one-sorted residuated Kleene algebras and their equational extensions form varieties as well.

\begin{definition}
A \emph{residuated Kleene algebra} is a Kleene algebra expanded with a binary operation $\to$ such that
\begin{equation}
xy \leq z \iff x \leq y \to z
\end{equation}
\end{definition} 

\begin{definition}
A \emph{residuated one-sorted Kleene algebra with tests} is a residuated Kleene algebra expanded with a unary operation $t$ satisfying (\ref{t:0}--\ref{t:negcone}) and
\begin{gather}
1 \leq t(t(x) \to 0) + t(x)\label{r:lem}\\
t(t(x) \to 0) \, t(x) \leq 0\label{r:ecq}
\end{gather}
We define $t'(x) = t(t(x) \to 0)$.
\end{definition}

It is an easy consequence of the definition that the class of residuated KAt forms a (finitely based) variety. We denote it as $\rKAt$.

\begin{proposition}
Each residuated KAt is a KAt.
\end{proposition}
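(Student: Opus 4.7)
The plan is to show that when $t'(x) = t(t(x) \to 0)$, as defined, satisfies every KAt axiom in a residuated KAt. Axioms \eqref{t:0}--\eqref{t:negcone} are literally shared between the two definitions, so the only nontrivial obligations are \eqref{t:lem}, \eqref{t:ecq}, and \eqref{t:cc}.

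The main workhorse is the identity $t(t(x)) = t(x)$, which I would derive first from \eqref{t:multi} and \eqref{t:idem}: setting $y = x$ in \eqref{t:multi} gives $t(t(x)\,t(x)) = t(x)\,t(x)$, and then \eqref{t:idem} collapses both sides to $t(t(x)) = t(x)$. Using this, two auxiliary identities follow immediately. First, since $t'(x) = t(t(x) \to 0)$ is itself in the image of $t$, we get $t(t'(x)) = t'(x)$. Second, $t'(t(x)) = t(t(t(x)) \to 0) = t(t(x) \to 0) = t'(x)$.

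With these in hand, the three KAt axioms reduce cleanly to the residuated axioms. Axiom \eqref{t:lem} becomes $1 \leq t'(t(x)) + t(x) = t'(x) + t(x) = t(t(x) \to 0) + t(x)$, which is exactly \eqref{r:lem}. Axiom \eqref{t:ecq} becomes $t'(t(x))\,t(x) = t'(x)\,t(x) = t(t(x) \to 0)\,t(x) \leq 0$, which is exactly \eqref{r:ecq}. Finally, for \eqref{t:cc}, both sides equal $t'(x)$: the left-hand side by $t'(t(x)) = t'(x)$, and the right-hand side by $t(t'(t(x))) = t(t'(x)) = t'(x)$.

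No step presents a genuine obstacle; the argument is essentially unfolding the definition of $t'$ and exploiting the fact that $t$ acts as a retraction onto its image. The slight care needed is only to recognize that the residuated axioms are phrased in terms of $t(x)$ directly (not $t(t(x))$), so the identities $t(t'(x)) = t'(x)$ and $t'(t(x)) = t'(x)$ are precisely what allows the KAt axioms, which quantify the complement via $t'(t(x))$, to line up with them.
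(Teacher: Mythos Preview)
Your proof is correct and follows essentially the same route as the paper's: reduce \eqref{t:lem}--\eqref{t:cc} to \eqref{r:lem}--\eqref{r:ecq} via the idempotence $t(t(x)) = t(x)$. The only cosmetic difference is that you obtain $t(t(x)) = t(x)$ from \eqref{t:multi} and \eqref{t:idem} (instantiating $y = x$), whereas the paper obtains it from \eqref{t:1} and \eqref{t:multi} (instantiating $y = 1$); both derivations are equally short and your presentation is, if anything, more explicit than the paper's.
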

\begin{proof}
It is easy to check that $t'$ as defined in terms of $t$ and $\to$ satisfies (\ref{t:lem}--\ref{t:ecq}). It follows from  \eqref{t:1} and \eqref{t:multi} that $t(t(x)) = t(x)$. Hence, \eqref{t:cc} holds as well.  
\end{proof}

\begin{theorem}\label{thm:KAexpands_toResKAt}
Each join-complete Kleene algebra expands to a residuated KAt.
\end{theorem}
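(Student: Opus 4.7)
My plan is to adapt the proof of Theorem~\ref{thm:KAexpands_toKAt} by using join-completeness to define the residual explicitly, and then reusing the trivial $t$ from that proof. Concretely, given a join-complete Kleene algebra $\mathscr{K}$, I would set
\begin{equation*}
y \to z := \bigvee \{ x \in K \mid xy \leq z \} \, ,
\end{equation*}
and define $t$ and $t'$ as in Theorem~\ref{thm:KAexpands_toKAt}, i.e., $t(0) = 0$, $t(x) = 1$ for $x \neq 0$, with $t'$ induced by $t$ and $\to$ via $t'(x) = t(t(x) \to 0)$.

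The first step is verifying the residuation law $xy \leq z \iff x \leq y \to z$. The forward direction is immediate since each such $x$ occurs in the set whose join defines $y \to z$. The converse requires $(y \to z)\,y \leq z$, which amounts to right-multiplication by $y$ preserving the join $\bigvee \{x \mid xy \leq z\}$. In a Kleene algebra, distributivity over finite joins is part of the semiring structure; join-completeness together with the tacit assumption that multiplication is join-continuous (which I expect to be part of how ``join-complete'' is intended here, or would otherwise need to be added as a hypothesis) yields the required infinite distributivity and hence the residuation property.

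The second step is checking the axioms of residuated KAt. Axioms \eqref{t:0}--\eqref{t:negcone} hold for the trivial $t$ by the computation already performed in the proof of Theorem~\ref{thm:KAexpands_toKAt}. For \eqref{r:lem} and \eqref{r:ecq} I would split on whether $x$ is $0$. If $x = 0$, then $t(x) = 0$, and $0 \to 0 = \bigvee K$; in any nondegenerate algebra this join is nonzero, so $t(0 \to 0) = 1$, which makes both \eqref{r:lem} and \eqref{r:ecq} trivially valid (the trivial algebra $\{0\} = \{1\}$ is handled separately, as all equations hold). If $x \neq 0$, then $t(x) = 1$, and $1 \to 0 = \bigvee \{y \mid y \leq 0\} = 0$, whence $t(1 \to 0) = 0$; both axioms again follow immediately.

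The main obstacle is the residuation step: without genuine distributivity of multiplication over arbitrary joins, the candidate $y \to z$ is merely a supremum of solutions and need not itself solve $xy \leq z$. Any fix either strengthens ``join-complete'' into a quantale-style hypothesis or derives the needed distributivity from additional structure; once that is in place the remainder is a short case analysis exactly as above.
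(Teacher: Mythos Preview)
Your proposal matches the paper's proof essentially line for line: define $y \to z$ as the join of all $x$ with $xy \leq z$, reuse the trivial $t$ from Theorem~\ref{thm:KAexpands_toKAt}, and verify \eqref{r:lem}--\eqref{r:ecq} by the same two-case split on whether $x = 0$. Your caution about whether join-completeness alone guarantees the residuation law is well-placed, but the paper itself simply asserts that step without further justification, so on this point you are already at least as careful as the original.
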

\begin{proof}
Each join-complete KA is residuated since we can define
$$ x \to y := \sum \{ z \mid zx \leq y \} \, .$$ Define $t(x)$ as in the proof of Theorem \ref{thm:KAexpands_toKAt}. We prove by cases that (\ref{r:lem}--\ref{r:ecq}) hold.
\begin{itemize}
\item If $x = 0$: $t(t(x) \to 0) + t(x) = t(0 \to 0) + 0 = t(1) + 0 = 1$ and $t(t(x) \to 0) \, t(x) = 10 = 0$;
\item If $x \neq 0$: $t(t(x) \to 0) + t(x) = t( 1 \to 0) + 1 = t(0) + 1 = 1$ and $t(t(x) \to 0) \, t(x) = 01 = 0$.\qedhere 
\end{itemize}
\end{proof}

Not every KA expands to a residuated KA \cite{Kozen1994a}, so not every KA expands to a residuated KAt.

\section{Conclusion}

In this paper we studied a one-sorted variant of Kleene algebra with tests, KAt. We have shown that KAt generalizes Kleene algebra with domain, KAD, and that it retains the good properties of KAD, namely (P1) the test algebra of any KAt is a Boolean algebra and (P4) the equational theory of KAT embeds into the equational theory of KAt. Moreover, KAt avoids the bad properties of KAD: unlike KAD, every Kleene algebra expands to a KAt (P2), and the test algebra in KAt is not necessarily the maximal Boolean subalgebra of the negative cone (P3). We have shown that adding several axioms familiar from KAD to KAt does in fact not injure (P2) and (P3). We believe that this leads to a better general view of the landscape of one-sorted program algebras.


\appendix

\section{Counter-examples}\label{sec:counterexamples}
The following algebra $\mathscr{A}_3$ exhibits a Kleene algebra with predomain operation $d$, which does not satisfy the locality axiom $d(2\cdot 2)=0\neq 1=d(2\cdot d(2))$. This example is closely related to the example in Prop.~5.3 in \cite{DesharnaisStruth2011}.
\begin{center}
\begin{tabular}{cccc}
\begin{picture}(30,30)(0,5)

\put(12,24){\circle*{4}}
\put(16,20){$1$}
\put(12,12){\line(0,1){12}}
\put(12,12){\circle*{4}}
\put(16,8){$2$}
\put(12,0){\line(0,1){12}}
\put(12,0){\circle*{4}}
\put(16,-4){$0$}
\put(7,-15){$\mathscr{A}_3$}

\end{picture}
&
\begin{tabular}{r|rrr}
$\cdot$ & 0 & 1 & 2\\
\hline
    0 & 0 & 0 & 0 \\
    1 & 0 & 1 & 2 \\
    2 & 0 & 2 & 0
\end{tabular}
&
\begin{tabular}{r|rrr}
$^*$ & 0 & 1 & 2\\
\hline
   & 1 & 1 & 1
\end{tabular}
&
\begin{tabular}{r|rrr}
$d$ & 0 & 1 & 2\\
\hline
   & 0 & 1 & 1
\end{tabular}

\end{tabular}
\end{center}
The following algebra $\mathscr{A}_4$ exhibits a Kleene algebra with predomain operation $d$, such that $d(A_4)$ is not closed under $\cdot$, since $d(2)=3\neq 2=d(2)d(2)\notin d(A_4)$.
\begin{center}
\begin{tabular}{cccc}
\begin{picture}(30,40)(0,5)

\put(12,36){\circle*{4}}
\put(16,32){$1$}
\put(12,24){\line(0,1){12}}
\put(12,24){\circle*{4}}
\put(16,20){$3$}
\put(12,12){\line(0,1){12}}
\put(12,12){\circle*{4}}
\put(16,8){$2$}
\put(12,0){\line(0,1){12}}
\put(12,0){\circle*{4}}
\put(16,-4){$0$}
\put(7,-15){$\mathscr{A}_4$}

\end{picture}
&
\begin{tabular}{r|rrrr}
$\cdot$ & 0 & 1 & 2 & 3\\
\hline
    0 & 0 & 0 & 0 & 0 \\
    1 & 0 & 1 & 2 & 3 \\
    2 & 0 & 2 & 0 & 0 \\
    3 & 0 & 3 & 2 & 3
\end{tabular}
&
\begin{tabular}{r|rrrr}
$^*$ & 0 & 1 & 2 & 3\\
\hline
   & 1 & 1 & 1 & 1
\end{tabular}
&
\begin{tabular}{r|rrrr}
$d$ & 0 & 1 & 2 & 3\\
\hline
   & 0 & 1 & 3 & 1
\end{tabular}

\end{tabular}
\end{center}
\end{document}